\newtheorem{ass}{Assumption}
\newtheorem{theorem}{Theorem}
\newtheorem{defn}{Definition}
\newtheorem{rem}{Remark}
\newtheorem{cor}{Corollary}
\def\ve{\varepsilon}
\def\mb{\mathbf}
\def\mc{\mathcal}
\def\mean#1{\left< #1 \right>}
\begin{document}
\title{\bf
Distributed Constraint-Coupled Optimization over Lossy Networks }
\author{Mohammadreza Doostmohammadian, Usman A. Khan, Alireza Aghasi,
%Tahmoores Farjam, 
and Themistoklis Charalambous
\thanks{M. Doostmohammadian is with the Department of Electrical Engineering and Automation, School of Electrical Engineering, Aalto University, Finland, Email: \texttt{name.surname@aalto.fi} and also with the Faculty of Mechanical Engineering at Semnan University,  Iran, email: \texttt{doost@semnan.ac.ir}. 
U. A. Khan is with the Department of Electrical and Computer Engineering, Tufts University, MA, USA, email: \texttt{khan@ece.tufts.edu}. 
A. Aghasi is with the Electrical Engineering and Computer Science Department, Oregon State University, USA, email: \texttt{ali.aghasi@gmail.com}.
T. Charalambous is with the Department of Electrical and Computer Engineering, School of Engineering, University of Cyprus, Nicosia, Cyprus, email: \texttt{surname.name@ucy.ac.cy}.  } 
\thanks{This work is supported in part by the European Commission through the H2020 Project FinEst Twins under Agreement 856602.}}

\maketitle

\begin{abstract}
	This paper considers distributed resource allocation and sum-preserving constrained optimization over lossy networks, where the links are unreliable and subject to packet drops. We define the conditions to ensure convergence under packet drops and link removal by focusing on two main properties of our allocation algorithm: (i) The weight-stochastic condition in typical consensus schemes is reduced to balanced weights, with no need for readjusting the weights to satisfy stochasticity. (ii) The algorithm does not require all-time connectivity but instead uniform connectivity over some non-overlapping finite time intervals. First, we prove that our algorithm provides primal-feasible allocation at every iteration step and converges under the conditions (i)-(ii) and some other mild conditions on the nonlinear iterative dynamics. These nonlinearities address possible practical constraints in real applications due to, for example, saturation or quantization among others.  
	Then, using (i)-(ii) and the notion of bond-percolation theory, we relate the packet drop rate and the network percolation threshold to the (finite) number of iterations ensuring uniform connectivity and, thus, convergence towards the optimum value.              
	
	\keywords uniformly-connected networks, packet drop, sum-preserving constrained optimization, allocation strategies, bond-percolation threshold, graph theory
\end{abstract}

\section{Introduction} \label{sec_intro}
Distributed algorithms are widely considered for optimization and learning over networks with several seminal works addressing the case of networks with reliable links \cite{falsone2020tracking,wu2021new,Carli_ADMM,wang2020dual,9683726,rikos2021optimal,boyd2006optimal,cherukuri2015distributed,mrd_2020fast,mrd_vtc}. Due to the unreliable nature of wireless communication, these results are not readily applicable in wireless networks. In such networks, randomness of the links should be considered as link-failures due to packet dropouts are inevitable. The unreliability of information exchange in such settings can influence the algorithm's convergence and solution feasibility.

Equality-constraint optimization problems intend to optimize the allocation cost \textit{subject to constant sum of overall resources}. The existing literature either focus on dual-based methods and ADMM solutions \cite{falsone2020tracking,wu2021new,Carli_ADMM,wang2020dual,9683726} or primal-feasible gradient-tracking methods \cite{rikos2021optimal,boyd2006optimal,cherukuri2015distributed,mrd_2020fast,mrd_vtc}. These allocation algorithms are developed over reliable communication networks and require either (i) stochastic weight design over the network or (ii) all-time network connectivity as a giant component \cite{boyd2006optimal,cherukuri2015distributed} or both \cite{wang2020dual,Carli_ADMM,falsone2020tracking,rikos2021optimal,wu2021new,9683726}. However, in unreliable networks with packet loss and message drops over the communication links, these restrictive conditions do not necessarily hold. As the messages are not delivered over some links at different times, the network resembles a dynamic graph topology over time that may even lose its connectivity at some iterations, failing condition (ii) for high drop rates. On the other hand, even assuming a connectivity guarantee, condition (i) mandates robust consensus algorithms (as in \cite{6426252,cons_drop_siam}) to modify the stochastic weights on the shared information after the drop of some messages.

The proposed allocation algorithm in this paper relaxes the stochastic condition in \cite{falsone2020tracking,rikos2021optimal,wu2021new,Carli_ADMM,wang2020dual,9683726} on the adjacency matrix to being only a \textit{balanced} weight matrix. Assuming the common knowledge of the delivered messages over the undirected links (i.e., both sender and receiver know whether its packet is delivered) \cite{olfati_rev,6426252}, there is no need for compensation strategies to readjust the weights to make them stochastic again in case of link removal or packet drops. Further, our solution only needs uniform-connectivity over time relaxing the all-time connectivity requirement over undirected  \cite{wang2020dual,Carli_ADMM,falsone2020tracking,wu2021new,9683726,boyd2006optimal} and directed networks \cite{cherukuri2015distributed,rikos2021optimal}. This implies that the network can lose path-connectivity (i.e., connectivity over a sequence of directly-connected nodes) between some nodes at some iterations. 

In this paper, the loss of packets at every iteration is modeled as removing the associated links over the network. Therefore, for high packet drop rates, a connected network may go through a transition phase and lose its connectivity (due to link failure). This is studied via percolation theory in network science literature \cite{wierman2010percolation} and is discussed in Section~\ref{sec_back} in details. 
In particular, \textit{bond-percolation} refers to phase-transition (or percolation) in the network connectivity under certain rate of link failure or removal \cite{mohseni2021percolation}. Similar threshold-based approaches are adopted in the epidemic processes with a spread of a certain disease or virus over the network \cite{li2013epidemic,van2010graph}. As a typical approach, we analyze this bond-percolation analysis over general random networks, where the randomness stems from the link failures, e.g., due to random packet dropouts in wireless sensor networks \cite{jakovetic2013distributed}. Some well-known existing random models include Erdos-Renyi (ER), Scale-Free (SF), and Small-World (SW) networks. Such models are further known to be a \textit{typical} representative of networks encountered in many real-world applications. Therefore, the literature focuses on studying such networks' properties to understand and resemble the behaviour of a diverse range of practical large-scale networks, including IoT, social, financial, and transportation networks. We relate the packet drop rate to the uniform connectivity of the network and, in turn, its bond-percolation threshold such that the convergence of the optimization algorithm is guaranteed. 

\textit{Paper Organization:} Section~\ref{sec_setup} formulates the problem with some background on the bond-percolation theory. Section~\ref{sec_alloc} provides the proposed allocation algorithm and its convergence analysis. Section~\ref{sec_drop} states the convergence under packet drops. Section~\ref{sec_sim} and \ref{sec_con} provide simulations and concluding remarks. 
 
\section{Problem Setup} \label{sec_setup}
\subsection{Background on Graph Theory} \label{sec_back}
In many networked applications, the topology and interactions of entities resemble a graph model. In such a network topology, connectivity plays a vital role in the convergence of the adopted algorithms. Any change in the network may cause a transition from connected to disconnected, which is studied via percolation theory. 

\begin{defn} \cite{wierman2010percolation}
Given the probability of link removal equal to $p$, \textit{bond-percolation} is defined as the probability threshold $p_c$ such that for $p>p_c$ there is no giant connected component in the network with probability $1$, i.e., the network loses its connectivity.
\end{defn}
Recall that, following Kolmogorov's zero-one law, one can also claim that for probability of link removal $p<p_c$ the network preserves its connectivity \cite{wierman2010percolation}. In other words, this \textit{critical} probability $p_c$ represents a two-sided \textit{phase transition} point in terms of network connectivity.

In the perspective of packet drop analysis, this percolation threshold is similarly defined as the probability where there exists \textit{no reliable delivery-path} between (at least) two nodes over the network, where a path between nodes $i,j$ denotes a sequence of linked nodes starting at $i$ and ending at $j$. Due to its complexity, no analytical solution exists to define the bond-percolation threshold for \textit{general} networks, and the existing literature is mostly limited to lattice and grid networks. For example, some rigorous bounds for different lattice graphs are given in \cite{li2021percolation,wierman2010percolation}. However, many experimental and numerical works on this problem exist in the literature based on the Monte-Carlo simulation \cite{mohseni2021percolation}. 
For a survey of percolation theory over complex networks and wireless networks see, for example, \cite{li2021percolation} and \cite{review_perc}. 

On the other hand, random graph models are analyzed to estimate the percolation thresholds of similar large-scale and complex real networks. 
An interesting result is reported following the nearest neighbour rule and the well-known \textit{scale-free} (SF) model, assuming a set of nodes scattered via the Poisson point process, for example, in networks of cellular or mobile phone base stations. One can show that connectivity of nodes to (at least) $m=3$ nearest neighbours, in the presence of lossy links and drop-outs, guarantees that the entire network remains connected \cite{review_perc}. For the ER random graph models \cite{erdos1960evolution}, the bond-percolation threshold is defined equal to $\frac{1}{\mean{\mc{N}}}$ with $\mean{\mc{N}}$ denoting the average node degree. For the Small-World (SW) networks, this threshold is exactly determined in \cite{moore2000exact}. Table~\ref{tab_perc} summarizes the bond-percolation thresholds for some typical networks.  

\begin{table} [h] 
		\centering
		\caption{Bond-percolation threshold for different networks }
		\label{tab_perc}
		\begin{tabular}{|c|c|}
			\hline
		   Network type & Bond-percolation threshold $p_c$ \\
		   	\hline
			 Square Grid \cite{mohseni2021percolation} &  $0.5$ \\
			\hline
			 ER \cite{erdos1960evolution} &  $\frac{1}{\mean{\mc{N}}}$ with $\mean{\mc{N}}$ as  mean degree  \\
			\hline
			 SW ($m=1$) \cite{moore2000exact} &  $\frac{-2\theta -1 + \sqrt{4\theta^2 + 12 \theta +1} }{4\theta}$ \\ & with $\theta$ as  shortcut probability \\
			 \hline
			 SF \cite{li2021percolation} &  $\frac{\zeta(\sigma -1,\mc{N}_m)}{\zeta(\sigma -2,\mc{N}_m) - \zeta(\sigma - 1,\mc{N}_m)}$ \\ & with $\zeta(\cdot, \cdot)$ as Hurwitz zeta function, \\ & $\mc{N}_m$ as  min node degree, \\ & and $\sigma$ as  power-law degree\\
			\hline
			\hline
		\end{tabular}
\end{table}

Next, we provide some relevant notions on algebraic graph theory \cite{diestel2017graph,graph_handbook} and Laplacian analysis over graphs \cite{SensNets:Olfati04}. Denote by $W$ the adjacency matrix associated with the network $\mc{G} = \{\mc{V},\mc{E}\}$, with weights $W_{ij} > 0$ as the weight on the link $(j, i) \in \mc{E}$ and $0$ for  $(j, i) \notin \mc{E}$. If the network is undirected with balanced weights, $W$ is symmetric. For such a network, define $\mc{N}_i = \{j | (j,i) \in \mc{N}_i\}$ as the neighborhood of node $i$. Define the diagonal matrix $\mc{D} = \mbox{diag}(\sum_{i=1}^n W_{ij})$ and the Laplacian matrix as $L = \mc{D} - W$. For a connected network (containing a spanning-tree) with symmetric $L$, its eigen-spectrum includes one and only one $0$ eigenvalue associated with the eigen-vector $\mb{1}_n$. Recall that spectral localization of the $L$ matrix plays a key role on the convergence properties of the consensus algorithms and distributed optimization methods, see more details in \cite{SensNets:Olfati04}. 

\subsection{The Equality-Constraint Allocation Problem}
The constraint-coupled optimization problem in this work is defined as follows:
\begin{align} \label{eq_dra}
	\mc{P}_1:~~	\min_\mb{x}
		~~ & F(\mb{x}) = \sum_{i=1}^{n} f_i(x_i),~
		\text{s.t.} ~  \sum_{i=1}^n x_i  = b,
\end{align}
Intuitively, solving $\mc{P}_1$ gives the optimal allocation of resources $\mb{x}$ for which the cost $F(\mb{x})$ is minimized. Recall that the equality-constraint $\sum_{i=1}^n x_i  = b$ (or $\mb{1}_n^\top \mb{x} = b$) is known as the \textit{feasibility constraint} or the \textit{sum-preserving constraint}.  

\begin{ass} \label{ass_lips_strict}
	The local objectives $f_i(x_i):\mathbb{R} \rightarrow \mathbb{R}$, ${i \in \{1,\dots,n\}}$ are strictly convex, differentiable with locally Lipschitz derivatives such that $\partial^2 f_i(x_i) < 2 u$.
	%$2 v <  \partial^2 f_i(x_i) < 2 u$.
\end{ass}

In general, there might be some local box constraints  $m_i \leq x_i \leq M_i$ involved in $\mc{P}_1$ that can be addressed using additive penalty terms \cite{nesterov1998introductory,bertsekas1975necessary} or the so-called barrier functions \cite{wu2021new}. Proper initialization for $x_i(0)$ under such box constraints is discussed in \cite{cherukuri2015distributed}. 
Following the KKT conditions and under Assumption~\ref{ass_lips_strict}, it is clear that for the unique optimal point $\mb{x}^*$ we have $\nabla F(\mb{x}^*) = \varphi^* \mb{1}_n$ with $\mb{1}_n$ as the column vector of 1s and $\nabla F(\mb{x}^*) := (\partial f_1(x^*_1); \cdots; \partial f_n(x^*_n))$, assuming that this $\mb{x}^*$ is in the range of the box constraints. 

\section{The Allocation Algorithm} \label{sec_alloc}
We consider the following gradient-tracking dynamics to solve $\mc{P}_1$:

\small \begin{align} \nonumber
	x_i&(k+1) = x_i(k) \\
	&-\eta \sum_{j \in \mc{N}_i} W_{ij} g_n\Big(g_l(\partial f_i (x_i(k)) - g_l(\partial f_j (x_j(k))\Big) 
	\label{eq_sol}
\end{align} \normalsize
with $g_n(\cdot)$ and $g_l(\cdot)$ representing some nonlinear functions on the nodes and the links. The following assumptions hold throughout the paper.

\begin{ass} \label{ass_con}
    All the links over the network $\mc{G}(k)$ are undirected and weight-balanced, i.e., the adjacency matrix $W(k)$ is symmetric. Further, there exists a  $B \in \mathbb{Z}_{\geq 0}$ such that the network $\mc{G}_B(k) = \bigcup_{k}^{k+B} \mc{G}(k)$ is connected, i.e., the union network $\mc{G}_B(k)$ includes a spanning-tree for all $k\geq 0$ (implying uniform connectivity or B-connectivity).
\end{ass}

Given the network structure, its associated balanced weight matrix $W$ can be designed in a distributed manner using the strategy in \cite{2014:ISCCSP2}. In case of quantized updates, similar to allocation in \cite{magnusson2018communication,rikos2021optimal}, \textit{integer} weight-balancing strategy \cite{rikos2020distributed} can be used, however, with convergence to the $\ve$-neighborhood of the optimizer \cite{magnusson2018communication}. 

\begin{ass} \label{ass_gx}
The nonlinear mappings $g_n,g_l:\mathbb{R} \rightarrow \mathbb{R}$ are odd and ``strongly'' sign-preserving with $g(z)z>0$ for $z\neq 0$, $g(0)=0$, and $\lim_{z\rightarrow 0}\frac{g(z)}{z} \neq 0$. Further, 
\begin{align} \label{eq_boundkK}
    \kappa \leq \frac{g(z)}{z} \leq \mc{K}
\end{align}
with $\kappa_l,\mc{K}_l$ and $\kappa_n,\mc{K}_n$ defined for $g_l(\cdot)$ and $g_n(\cdot)$, respectively.
\end{ass}

Such $g_n(\cdot)$ and $g_l(\cdot)$ include, for example, all monotonically increasing and Lipschitz functions. It can be proved that under Assumptions~\ref{ass_con}-\ref{ass_gx}, the gradient-tracking solution \eqref{eq_sol} satisfies the feasibility constraint at all times $k$. This is called \textit{all-time feasibility}, and, in resource allocation perspective, implies that the assigned resources $\sum_{i=1}^n x_i$ always meet the demand $b$ and is a privilege of our gradient-tracking solution \eqref{eq_sol}. Note that violating this constraint can damage or disrupt the service in many applications \cite{wu2021new}. Our proposed allocation scheme is summarized in Algorithm~\ref{alg_ac}.
\begin{algorithm}
 \KwData{Input:  $W$, $\mc{N}_i$, $\eta$, $f_i(\cdot)$}
 \KwResult{Output: Final state $\mb{x}(k)$ and cost $F(\mb{x}(k))$}
 {\textbf{Initialization:} Every node $i$ sets $k=0$ and randomly chooses $m_i \leq x_i(0) \leq M_i$ satisfying the feasibility}
 \While{termination criteria NOT true\;
 }{Each node $i$ receives $g_l(\partial f_j (x_j(k))$ from $j\in \mc{N}_i$ \;
 Updates $x_i(k+1)$ via Eq.~\eqref{eq_sol} \;
 Shares $\partial f_i (x_i(k+1))$ with its neighboring nodes $j \in \mc{N}_i$  \;
 Sets $k \leftarrow k+1$  \;
 }
\caption{\textsf{The Resource Allocation Algorithm}}
\end{algorithm}

% \begin{algorithm}[h!] 
% 	\caption{\textsf{The Resource Allocation Algorithm}} 
% 	\begin{algorithmic}[0] %[1] for numbering
% 		\STATE \textbf{Input:}  $W$, $\mc{N}_i$, $\eta$, $f_i(\cdot)$
% 		\STATE  \textbf{Initialization:} Every node $i$ sets $k=0$ and randomly chooses $m_i \leq x_i(0) \leq M_i$ satisfying the feasibility  
%         \While {termination criteria NOT true}
% 		\State Each node $i$ receives $g_l(\partial f_j (x_j(k))$ from $j\in \mc{N}_i$
%         \State Updates $x_i(k+1)$ via Eq.~\eqref{eq_sol}
%         \State Shares $\partial f_i (x_i(k+1))$ with its neighboring nodes $j \in \mc{N}_i$
%         \State Sets $k \leftarrow k+1$
%         %\EndWhile
%   	    \STATE \textbf{Output:} Final state $\mb{x}(k)$ and cost $F(\mb{x}(k))$  
%   	    \end{algorithmic}  
%   	    \label{alg_ac}
% \end{algorithm} 
Example applications in economic dispatch problem \cite{mrd_2020fast} and CPU scheduling and battery reservation over the smart grid \cite{mrd_vtc} are given for $g_n(z) = z|z|^{v_1-1} + z|z|^{v_2-1}$ with $0<v_1<1, v_2>1$, and logarithmic quantizer $ q_l(z) = \mbox{sgn}(z) \exp(q_{u}(\log(|z|,\rho)))$
with $q_{u}(z) = \rho \left[ \frac{z}{\rho}\right]$ as the uniform quantizer, $\mbox{sgn}(\cdot)$ as the sign function, and $[\cdot]$ as rounding to the nearest integer. In case the function $g_n(\cdot)$ or $g_l(\cdot)$ is sign-preserving, but violates condition \eqref{eq_boundkK}, one can guarantee convergence to the $\varepsilon$-neighborhood of the optimizer $\mb{x}^*$; for example, with uniform quantization \cite{mrd20211st} and single-bit data exchange scenarios \cite{taes2020finite}.   

\begin{theorem} \label{thm_conv}
    Let Assumptions~\ref{ass_lips_strict}-\ref{ass_gx} hold. The dynamics converges to the optimal solution of $\mc{P}_1$ for $\eta (B+1)  <  \overline{\eta} := \frac{\kappa_n \kappa_l \lambda_2}{u \lambda_n^2 \mc{K}_n^2 \mc{K}_l^2}$ 
    %if there exists $\overline{B}<\infty$ such that $ \mc{G}_B(k) = \bigcup_{k}^{k+\overline{B}} \mc{G}(k)$ is connected for all $k\geq 1$ and
    with $\lambda_2,\lambda_n$ as the smallest non-zero and largest eigenvalue of $\mc{G}_B(k)$ for all $k\geq 0$.  
\end{theorem}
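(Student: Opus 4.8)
The plan is to run a descent-type Lyapunov analysis on the objective gap, and then lift the single-step argument to the $B$-connectivity windows. First I would record the all-time feasibility already claimed below \eqref{eq_sol}: writing $y_i(k)=\partial f_i(x_i(k))$ and summing \eqref{eq_sol} over $i$, the symmetry $W_{ij}=W_{ji}$ (Assumption~\ref{ass_con}) together with the oddness of $g_n$ (Assumption~\ref{ass_gx}) cancels the $(i,j)$ and $(j,i)$ contributions pairwise, so $\mb{1}_n^\top\mb{x}(k+1)=\mb{1}_n^\top\mb{x}(k)=b$ and every iterate stays on the feasible hyperplane. I would then take $V(k)=F(\mb{x}(k))-F(\mb{x}^*)$, which by strict convexity (Assumption~\ref{ass_lips_strict}) is nonnegative on this hyperplane and vanishes only at the unique minimizer $\mb{x}^*$ characterized by $\nabla F(\mb{x}^*)=\varphi^*\mb{1}_n$.

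Next I would invoke the descent lemma using the $2u$-smoothness implied by $\partial^2 f_i<2u$, giving $V(k+1)-V(k)\le \nabla F(\mb{x}(k))^\top\Delta\mb{x}+u\|\Delta\mb{x}\|^2$ with $\Delta\mb{x}=\mb{x}(k+1)-\mb{x}(k)$. The first-order term symmetrizes exactly as in the feasibility step to $-\tfrac{\eta}{2}\sum_{i,j}W_{ij}(y_i-y_j)\,g_n(g_l(y_i)-g_l(y_j))$, and here the sign-preserving sector bounds of Assumption~\ref{ass_gx} do the work: each summand obeys $(y_i-y_j)\,g_n(g_l(y_i)-g_l(y_j))\ge \kappa_n\kappa_l(y_i-y_j)^2$, so the first-order term is at most $-\eta\kappa_n\kappa_l\,\mb{y}^\top L\mb{y}\le -\eta\kappa_n\kappa_l\lambda_2\|\mb{y}-\bar y\mb{1}_n\|^2$, where $\bar y$ is the average gradient and $\lambda_2$ the Fiedler value. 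For the second-order term I would use the upper sector bounds $|g_n(g_l(y_i)-g_l(y_j))|\le \mc{K}_n\mc{K}_l|y_i-y_j|$, writing $\Delta\mb{x}=-\eta L_c\mb{y}$ for a state-dependent weighted Laplacian $L_c$ whose edge weights are bounded by $\mc{K}_n\mc{K}_l W_{ij}$ and whose spectral radius is therefore at most $\mc{K}_n\mc{K}_l\lambda_n$, yielding $\|\Delta\mb{x}\|^2\le \eta^2\mc{K}_n^2\mc{K}_l^2\lambda_n^2\|\mb{y}-\bar y\mb{1}_n\|^2$. Combining, $V(k+1)-V(k)\le -\eta\big(\kappa_n\kappa_l\lambda_2-u\eta\mc{K}_n^2\mc{K}_l^2\lambda_n^2\big)\|\mb{y}-\bar y\mb{1}_n\|^2$, which is strictly negative precisely when $\eta<\overline{\eta}$; this already settles the all-time-connected case $B=0$.

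The remaining and hardest step is to upgrade this to the $B$-connected setting, since an instantaneous graph $\mc{G}(k)$ may be disconnected so that $\lambda_2(\mc{G}(k))$ vanishes and the per-step descent degenerates along some disagreement directions. The plan is to track $V$ over non-overlapping windows of length $B+1$ and replace the per-step Laplacian by the union Laplacian $\sum_{t=k}^{k+B}L(t)$, which corresponds to the connected $\mc{G}_B(k)$ and hence satisfies $\mb{y}^\top\big(\sum_t L(t)\big)\mb{y}\ge \lambda_2\|\mb{y}-\bar y\mb{1}_n\|^2$. To transfer the instantaneous descent terms, evaluated at $\mb{y}(t)$, onto the single anchor $\mb{y}(k)$ at the window start, I would bound the within-window gradient drift through smoothness, $\|\mb{y}(t)-\mb{y}(k)\|\le 2u\sum_{s=k}^{t-1}\|\Delta\mb{x}(s)\|$, and feed in the per-step bound on $\|\Delta\mb{x}\|$ from the previous paragraph; the cumulative drift scales like $\eta B$, so the accumulated cross terms and second-order residuals are dominated exactly when the product $\eta(B+1)$ falls below the same threshold $\overline{\eta}$. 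This is where the $(B+1)$ factor originates, and showing that a net strict decrease governed by the union graph's $\lambda_2$ survives after absorbing these drift and cross terms is the main obstacle.

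Finally, telescoping the windowed decrease shows $V(k)$ is monotone nonincreasing and bounded below, so the summed right-hand side converges and forces $\|\mb{y}(k)-\bar y(k)\mb{1}_n\|\to 0$, i.e. the gradients reach consensus. On the feasible hyperplane this consensus condition together with the strict monotonicity of each $\partial f_i$ uniquely identifies the KKT point $\nabla F(\mb{x}^*)=\varphi^*\mb{1}_n$, so $\mb{x}(k)\to\mb{x}^*$, completing the argument.
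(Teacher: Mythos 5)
Your proposal follows essentially the same route as the paper's (itself only a sketch): all-time feasibility via symmetry of $W$ and oddness of $g_n$, a descent-lemma bound on $F(\mb{x}(k))-F(\mb{x}^*)$ using the smoothness constant $u$, lower-bounding the first-order term by $\kappa_n\kappa_l\lambda_2$ times the squared gradient dispersion and upper-bounding the second-order term by $\eta^2\mc{K}_n^2\mc{K}_l^2\lambda_n^2$ times the same quantity, which yields exactly the threshold $\overline{\eta}$. Your windowed treatment of the $B$-connected case is in fact more explicit than the paper's, which simply asserts that the bound becomes $\eta(B+1)<\overline{\eta}$ with the spectral quantities taken over $\mc{G}_B$.
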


\begin{proof}
   We provide the sketch of the proof here.
   
   \textbf{Uniqueness:} Following Assumption~\ref{ass_lips_strict} and the KKT conditions one can prove that there exists a unique optimizer $\mb{x}^*$ to $\mc{P}_1$ satisfying $\nabla F \in \mbox{span}(\mb{1}_n)$ \cite{boyd2006optimal}. 
   
   \textbf{Feasibility:}  Under Assumptions~\ref{ass_con}-\ref{ass_gx}, recall from \cite[Lemma~3]{mrd_2020fast} that, for $\mb{z} \in \mathbb{R}^n$,
   	\begin{align} \nonumber
	\sum_{i=1}^n z_i\sum_{j =1}^n W_{ij} g_n(g_l(z_j) -g_l(z_i)) &= \\
	- \frac{1}{2} \sum_{i,j=1}^n W_{ij}(z_j-z_i) &g_n(g_l(z_j)-g_l(z_i)).
	\end{align}
   Substituting for $\mb{z} = \nabla F(\mb{x}(k))$, we have $\mb{1}_n^\top\mb{x}(k+1) = \mb{1}_n^\top\mb{x}(k)$ and  the proof follows.
   
   \textbf{Convergence:}
   Define $\overline{F}(k) := F(\mb{x}(k))-F(\mb{x}^*)$, ${\delta \mb{x}(k) := \mb{x}(k+B)-\mb{x}(k)}$. Following the strong-convexity in Assumption~\ref{ass_lips_strict} we have \cite{nesterov1998introductory} 
   \begin{align} \nonumber
	F(\mb{x}(k+B)) &\leq F(\mb{x}(k)) \\
	&+ \nabla F(\mb{x}(k))^\top \delta \mb{x}(k) +  u\delta \mb{x}(k)^\top \delta \mb{x}(k)
	\label{eq_taylor_2}
    \end{align}
    We first consider the case $B=1$.
    To prove $\overline{F}(k+1) \leq \overline{F}(k)$, we need to show 
\begin{align} \label{eq_proof1}
\nabla F^\top \delta \mb{x}  + u \delta \mb{x}^\top \delta \mb{x}  \leq 0
\end{align} 
where we dropped the $k$ for notation simplicity. Recall the definition of the Laplacian-gradient tracking dynamics in \cite{cherukuri2015distributed} and define dispersion parameter $\xi(k) := \nabla F(k) - \mb{1}_n^\top \nabla F(k)$. Using the results in \cite[Section~V]{SensNets:Olfati04}, from \eqref{eq_sol} and some mathematical manipulations based on Assumptions~\ref{ass_con}-\ref{ass_gx} it is sufficient that
% \begin{align} \label{eq_proof2}
%     -\eta   \nabla F^\top     L\varphi  + u \eta^2     \varphi^\top L^\top L\varphi \leq 0
% \end{align} 
\begin{align} 
\label{eq_proof_rho}
(-\kappa_n \kappa_l \eta \lambda_2 + u \lambda_n^2 \mc{K}_n^2 \mc{K}_l^2 \eta^2)    \xi^\top    \xi \leq 0 
\end{align}
with the strict inequality for
\begin{align} \label{eq_eta}
\eta <  \frac{\kappa_n \kappa_l \lambda_2}{u \lambda_n^2 \mc{K}_n^2 \mc{K}_l^2}
\end{align}
and for $ \xi=\mb{0}_n$ holds the equality implying $\nabla F \in \mbox{span}(\mb{1}_n)$. For $B\geq 1$, the right-hand-side of \eqref{eq_eta} changes to $\eta (B+1)$ with parameters $\lambda_2, \lambda_n$ defined for $\mc{G}_B$.
This completes the proof.
\end{proof}

Note that, in the case of versatile network topology, one can choose the ratio $\frac{\lambda_2}{\lambda_n^2}$ for the spanning tree contained by $\mc{G}_B(k)$ as the minimum connectivity ensured by Assumption~\ref{ass_con}. This min spanning tree problem could be a promising direction of future research to determine the bound on the convergence rate of the algorithm. 
%min-cost spanning tree problem \cite{tnse_complexity}.

Some relevant results on the eigen-spectrum of $L$ are given in the literature, for example, to estimate the spectral range based on the node degrees \cite{SensNets:Olfati04}, its variation under link removal/addition \cite{van2010graph}, bounds on the algebraic connectivity $\lambda_2(\mc{G}) \geq \frac{1}{nd_g}$ with $d_g$ as the network diameter \cite[p. 571]{graph_handbook}.   

The following remark distinguishes this work in terms of packet-drop tolerance. 

\begin{rem} In the proposed allocation dynamics \eqref{eq_sol}:
\begin{enumerate}
    \item In contrast to \cite{boyd2006optimal,falsone2020tracking,rikos2021optimal,wu2021new,Carli_ADMM,wang2020dual,9683726}, we do not require the weight matrix to be stochastic, but only to be symmetric. Therefore, there is no need for weight compensation strategies  \cite{6426252,cons_drop_siam} after packet drops or any change in the network.
    \item The algorithm converges under uniform connectivity, in contrast to all-time connectivity requirement, e.g., in \cite{boyd2006optimal,cherukuri2015distributed,rikos2021optimal} or ADMM-based solutions \cite{wang2020dual,Carli_ADMM,falsone2020tracking,9683726,wu2021new}. Therefore, although the network may lose its connectivity over some time periods, due to a high rate of packet drops or switching over sparse topologies, uniform connectivity over every $B \in \mathbb{Z}_{\geq 0}$ steps  is sufficient for convergence. 
\end{enumerate}
\end{rem}

This remark motivates the application over lossy networks, as discussed next.

\section{Packet Drops and Sparse Connectivity} \label{sec_drop}

In this section, we focus on networks with unreliable links. This can represent various scenarios, for instance, when nodes are activated from sleep mode at a random times as is common in devices with energy harvesting capabilities \cite{Zou:2016}. In general, data transmission over wireless networks is subject to random packet dropouts which motivates us to consider a topology with random links.
Given a network topology $\mc{G}$ we model the packet drops over unreliable links at time $k$ by removing those links from the graph structure $\mc{G}(k)$. Simply speaking, given $m$ packet drops over distinct links at time $k$, $m$ links are removed from the network $\mc{G}(k)$. Then, to satisfy Assumption~\ref{ass_con} for convergence, the remaining reliable network needs to hold uniform connectivity over every $B$ time iterations for a $B \in \mathbb{Z}_{\geq 0}$. 

Assuming common knowledge, both nodes $i,j$ are aware of the delivery or loss of the messages or packets over the bi-directional link $(i,j)$ \cite{6426252,olfati_rev}. This is to (i) keep the networking balanced following Assumption~\ref{ass_con} and (ii) satisfy the all-time feasibility in the proof of Theorem~\ref{thm_conv}. For this purpose, in case the message from $i$ to $j$ is lost at iteration $k$, node $i$ \textit{does not} incorporate \textit{possibly received} message $g_l(\partial f_j (x_j(k))$ from node $j$ in its data-processing and updating state $x_i(k+1)$ via \eqref{eq_sol}. In other words, the mutual messages are either both dropped or both used in Algorithm~\ref{alg_ac}. This consideration makes the probability of packet drop different from the probability of link removal in our calculations; for packet drop rate $p_d$ (or packet delivery rate $1-p_d$) over the links from $i$ to $j$ or $j$ to $i$, the equivalent probability of link removal in our analysis follows as 
\begin{align} \label{eq_pl}
    p_l = 1-(1-p_d)^2 = 2p_d-p_d^2.
\end{align} 
implying the probability that either of the messages or both are lost.  

Recall that the notion of uniform connectivity implies that the union network over every finite number of time-steps $B$ is connected, i.e., $\mc{G}_B = \bigcup_{k}^{k+B} \mc{G}(k)$ is connected. This is much more relaxed than the all-time connectivity requirement in many literature \cite{wang2020dual,Carli_ADMM,falsone2020tracking,boyd2006optimal,rikos2021optimal,wu2021new,9683726,cherukuri2015distributed}. Intuitively, for example with $B=2$, one can assume existence of $B=2$ links between every two nodes $i,j$ over the union network $\mc{G}_B = \bigcup_{k}^{k+B} \mc{G}(k)$. Assume $p_l$ as the probability of link removal over connected graph $\mc{G}$. Recalling the definition of the union graph, the link between $i,j$ over $\mc{G}_B(k)$ is lost if the link over both $\mc{G}(k)$ and $\mc{G}(k+1)$ are removed as unreliable links. Therefore, the probability that any link is unreliable over the union network $\mc{G}_B(k)$ is $p_l^2$ for $B=2$. One can extend this to uniform connectivity over any $B$ iterations, i.e., the probability of link removal over $\mc{G}_B(k) = \bigcup_{k}^{k+B} \mc{G}(k)$ for $B \geq 1$ is $p_l^{B+1}$. This is the intuition behind the following theorem.

%REWRITE
\begin{theorem} \label{thm_drop}
    Assume a connected network topology $\mc{G}$ with bond-percolation threshold $p_c$. There exists $B \in \mathbb{Z}_{\geq 0}$ such that $\mc{G}_{B} = \cup_{k}^{k+B}\mc{G}(k)$ remains uniformly connected with probability $1$ under any drop rate $1-\sqrt{1-p_c}<p_d<1$; this $B$ satisfies
    \begin{align} \label{eq_B*}
        (2p_d-p_d^2)^{B+1}<p_c.
    \end{align}
    For $p_d<1-\sqrt{1-p_c}$ the network $\mc{G}(k)$ remains connected at iteration $k$ with probability $1$.
\end{theorem}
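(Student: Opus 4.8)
The plan is to reduce the whole statement to the bond-percolation characterization recalled after Definition~1: the underlying topology $\mc{G}$ stays connected with probability $1$ whenever its effective link-removal probability lies strictly below $p_c$, and loses connectivity with probability $1$ above it (Kolmogorov's zero-one law). The bridge between the packet-drop rate $p_d$ and the percolation model is the link-removal probability $p_l = 2p_d - p_d^2 = 1-(1-p_d)^2$ from \eqref{eq_pl}; every connectivity claim will be phrased in terms of $p_l$ and then translated back to $p_d$.

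First I would dispatch the second (single-slot) claim. Writing $p_d < 1-\sqrt{1-p_c}$ as $1-p_d > \sqrt{1-p_c}$ and squaring gives $(1-p_d)^2 > 1-p_c$, hence $p_l = 1-(1-p_d)^2 < p_c$. Since each link of $\mc{G}(k)$ is independently removed with probability $p_l < p_c$, the percolation criterion applies directly and $\mc{G}(k)$ is connected at every iteration $k$ with probability $1$. Running the same algebra in the opposite direction shows that the hypothesis $p_d > 1-\sqrt{1-p_c}$ of the first claim is exactly $p_l > p_c$; in that regime a single snapshot $\mc{G}(k)$ is disconnected with probability $1$, which is precisely why the union construction is needed.

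Next I would treat the union graph, and this is the step I expect to carry the conceptual weight. The key observation is that $\mc{G}_B = \cup_k^{k+B}\mc{G}(k)$ is itself a bond-percolation instance on the \emph{same} underlying topology $\mc{G}$: its potential edge set is that of $\mc{G}$, and an edge is absent from $\mc{G}_B$ only if it is dropped in all $B+1$ snapshots $\mc{G}(k),\dots,\mc{G}(k+B)$. By independence of drops across time slots this occurs with probability $p_l^{B+1}$, so $\mc{G}_B$ is governed by the \emph{same} critical threshold $p_c$ but with effective removal probability $p_l^{B+1}$ in place of $p_l$. Consequently uniform ($B$-)connectivity with probability $1$ is equivalent to $p_l^{B+1} < p_c$, which is exactly condition \eqref{eq_B*}.

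Finally I would argue existence of a finite such $B$. Because $p_d < 1$ forces $p_l = p_d(2-p_d) \in (0,1)$, the sequence $p_l^{B+1}$ is strictly decreasing to $0$ and therefore eventually falls below the fixed positive constant $p_c$; explicitly any integer $B > \frac{\log p_c}{\log p_l} - 1$ works, and rounding up yields a finite $B \in \mbb{Z}_{\geq 0}$ satisfying \eqref{eq_B*}. Invoking the percolation criterion on $\mc{G}_B$ with removal probability $p_l^{B+1} < p_c$ then gives uniform connectivity with probability $1$, completing the first claim. The only genuinely delicate point is the identification of $\mc{G}_B$ with a percolation instance on $\mc{G}$ at reduced removal probability; once the common potential edge set and the across-slot independence are in place, the two threshold comparisons are routine algebra.
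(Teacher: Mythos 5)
Your proposal is correct and follows essentially the same route as the paper's own proof: translate the packet-drop rate into the link-removal probability $p_l = 2p_d - p_d^2$ via \eqref{eq_pl}, observe that the union graph $\mc{G}_B$ is a percolation instance on $\mc{G}$ with effective removal probability $p_l^{B+1}$, and pick $B$ large enough that this falls below $p_c$. You are somewhat more explicit than the paper (spelling out the across-slot independence, the explicit bound $B > \log p_c / \log p_l - 1$, and the algebra for the sub-threshold case $p_d < 1-\sqrt{1-p_c}$), but these are elaborations of the same argument rather than a different approach.
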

\begin{proof}
   First, recall that for packet drop rate $p_d$ the probability of link removal is $p_l = 2p_d-p_d^2$ from \eqref{eq_pl}.
   For percolation threshold $p_c$, find $B^* \in \mathbb{Z}_{\geq 0}$ as the minimum value satisfying $p_l^{1+B^*}<p_c$. Note that, for  $p_l<1$, function $p_l^{1+B^*}$ is monotonically increasing on $p_l$ and decreasing on $B^*$. Then, for any $B \geq B^*$, the probability that all the links are dropped over $B^*$ time-steps is less than the percolation threshold $p_c$. This means that, for $B\geq B^*$ and $p_c<p_l<1$, the network $\mc{G}_{B}$ remains uniformly connected with probability $1$ under link removal probability $p_l = 2p_d-p_d^2$. This gives the admissible range of the packet drop rate $1-\sqrt{1-p_c}<p_d<1$.
\end{proof}

\begin{cor}
    Assume a dynamic sparse network which is not connected but uniformly-connected over $B_0 >0$ time iterations. Given the bond-percolation threshold $p_c$ associated with $\mc{G}_{B_0}$ and drop rate  $1-\sqrt{1-p_c}<p_d<1$, one can find $B^* \in \mathbb{Z}_{\geq 1}$ such that $p_l^{1+B^*} < p_c$ with $p_l = 2p_d-p_d^2$. Then, the union network $\mc{G}_B$ under the same packet drop rate remains uniformly connected for any $B \geq B_0 B^*$. 
\end{cor}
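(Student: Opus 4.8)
The plan is to deduce the corollary from Theorem~\ref{thm_drop} by a nesting argument: I treat the connected union graph $\mc{G}_{B_0}$ as the fixed base topology and a block of $B_0$ consecutive iterations as a single ``effective snapshot'' of it. First I would make the hypothesis precise: uniform connectivity over $B_0$ iterations, together with the threshold $p_c$ being \emph{associated with} $\mc{G}_{B_0}$, I read as saying that the pre-drop union $\mc{G}_{B_0}=\bigcup_{k}^{k+B_0}\mc{G}(k)$ is a fixed connected graph, so that $p_c$ is well defined and every link of $\mc{G}_{B_0}$ occurs in at least one snapshot of each $B_0$-block. Note that the admissible drop regime $1-\sqrt{1-p_c}<p_d<1$ is exactly $p_l=2p_d-p_d^2>p_c$, so a single block is generally \emph{not} reliably connected; this is precisely what forces the multiplicative window $B_0 B^*$ rather than a single $B_0$.

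Next I would carry out the bond-percolation bookkeeping at the block level. Partition a window of length $B\geq B_0 B^*$ into $1+B^*$ consecutive $B_0$-blocks, each of which has pre-drop union $\mc{G}_{B_0}$. Fix a link $e\in\mc{G}_{B_0}$; it occurs at least once per block, hence at least $1+B^*$ times inside $\mc{G}_B$. Assuming drops are independent across occurrences with per-occurrence removal probability $p_l$, the link $e$ is absent from the realized union $\mc{G}_B$ only if it is dropped on every one of these occurrences, so
\begin{align}
\mathrm{Pr}[\,e\notin \mc{G}_B\,]\;\leq\; p_l^{\,1+B^*}\;<\;p_c,
\end{align}
where the last inequality is the defining property of $B^*$.

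I would then close the argument exactly as in Theorem~\ref{thm_drop}. The realized graph $\mc{G}_B$ is a random bond-subgraph of the connected base $\mc{G}_{B_0}$ in which each link fails with effective probability at most $p_l^{1+B^*}<p_c$, i.e., the effective removal probability lies strictly below the percolation threshold of $\mc{G}_{B_0}$. Invoking the bond-percolation threshold together with Kolmogorov's zero--one law (as used right after the definition of $p_c$) then gives that $\mc{G}_B$ remains connected with probability $1$. Since this holds for every starting index $k$, the dynamic network is uniformly connected over every window of length $B\geq B_0 B^*$, which is the claim.

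The main obstacle is the nesting step itself: I must verify that collapsing each $B_0$-block into an effective snapshot still yields the i.i.d.\ bond-percolation model for which $p_c$ is defined. Concretely, this requires that distinct links of $\mc{G}_{B_0}$ fail independently, each with probability bounded by $p_l^{1+B^*}$, and that the drops on the (at least $1+B^*$) occurrences of a single link are independent, so that the exponent $1+B^*$ is genuinely attained rather than a smaller power. A secondary, purely bookkeeping difficulty is the off-by-one in the convention $\mc{G}_B=\bigcup_k^{k+B}\mc{G}(k)$: fitting $1+B^*$ blocks of length $B_0$ into the window costs an additive $O(B_0+B^*)$ term, so the stated bound $B\geq B_0 B^*$ is best read as the dominant multiplicative scaling rather than a sharp constant.
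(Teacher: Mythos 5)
Your block-nesting argument is correct and is essentially the intended proof: the paper leaves this corollary unproved, but your reduction to Theorem~\ref{thm_drop} by collapsing each $B_0$-window into one effective snapshot of the connected base graph $\mc{G}_{B_0}$ --- so that a link of $\mc{G}_{B_0}$ is absent from $\mc{G}_B$ only if it is dropped in all of its at least $1+B^*$ occurrences, with probability at most $p_l^{1+B^*}<p_c$ --- is exactly the extension of that theorem's reasoning. Your side remarks on the required independence assumptions and on the off-by-one between $B\geq B_0B^*$ and the $B_0(1+B^*)$ snapshots actually needed are fair and, if anything, more careful than the paper's own handling of the analogous step in Theorem~\ref{thm_drop}.
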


\begin{rem}
   In the case of heterogeneous and time-varying drop rates $p_{ij}(k)$ at different links $(i,j)$, one can consider $p_d = \max\{p_{ij}(k)\}$ in Theorem~\ref{thm_drop} as a conservative solution to find minimum $B^*$ value. 
\end{rem}

\section{Simulation} \label{sec_sim}

For the simulation we consider a network $\mc{G}$ of $n=20$ nodes based on the ER model shown in Fig.~\ref{fig_graphs}(TopLeft). The linking probability between every two nodes is considered as $p=0.3$. The ER theory states that for $p>\frac{1}{n-1}$, the network is connected \cite{erdos1960evolution}, i.e., the connectivity transition occurs at $np=1$. We consider different link removal probabilities $p_l = [0.64, 0.71, 0.79, 0.86, 0.93]$ over $40$-steps switching periods. From \eqref{eq_pl} the associated packet drop rates are $p_d = [0.4,0.46,0.54,0.62,0.73]$. We use MATLAB \textsf{randperm} function to randomly assign these rates over successive periods of every $200$ steps. 
Sample network topologies (after removal of unreliable links) associated with the given probabilities $p_l$ are shown in Fig.~\ref{fig_graphs}. 
%Fig.~\ref{fig_graphs}(a) (with $p_l=0$) gives the union of Fig.~\ref{fig_graphs}(b)-(f) and one can claim that $\mc{G}_B(k) = \bigcup_{k}^{k+400} \mc{G}(k)$ remains connected under such random switching.
For the ER graph in Fig.~\ref{fig_graphs}(a) (with $p_l=0$), we have  $\mean{\mc{N}} = 5.6$ (as average node degree) and, from Table~\ref{tab_perc}, we have $p_c = 0.177$. Note that all considered $p_l$ values are over this threshold. From Theorem~\ref{thm_drop}, we have respective values $B^* = [3,5,7,11,23]$, which implies uniform-connectivity over $B \geq 23$ steps and, thus, any sufficiently small $\eta$ satisfying Theorem~\ref{thm_conv} ensures convergence of the allocation algorithm, as described next.   

\begin{figure*}[t]
\centering
  \subfigure[$p_l=0$]
 {       \includegraphics[width=2in]{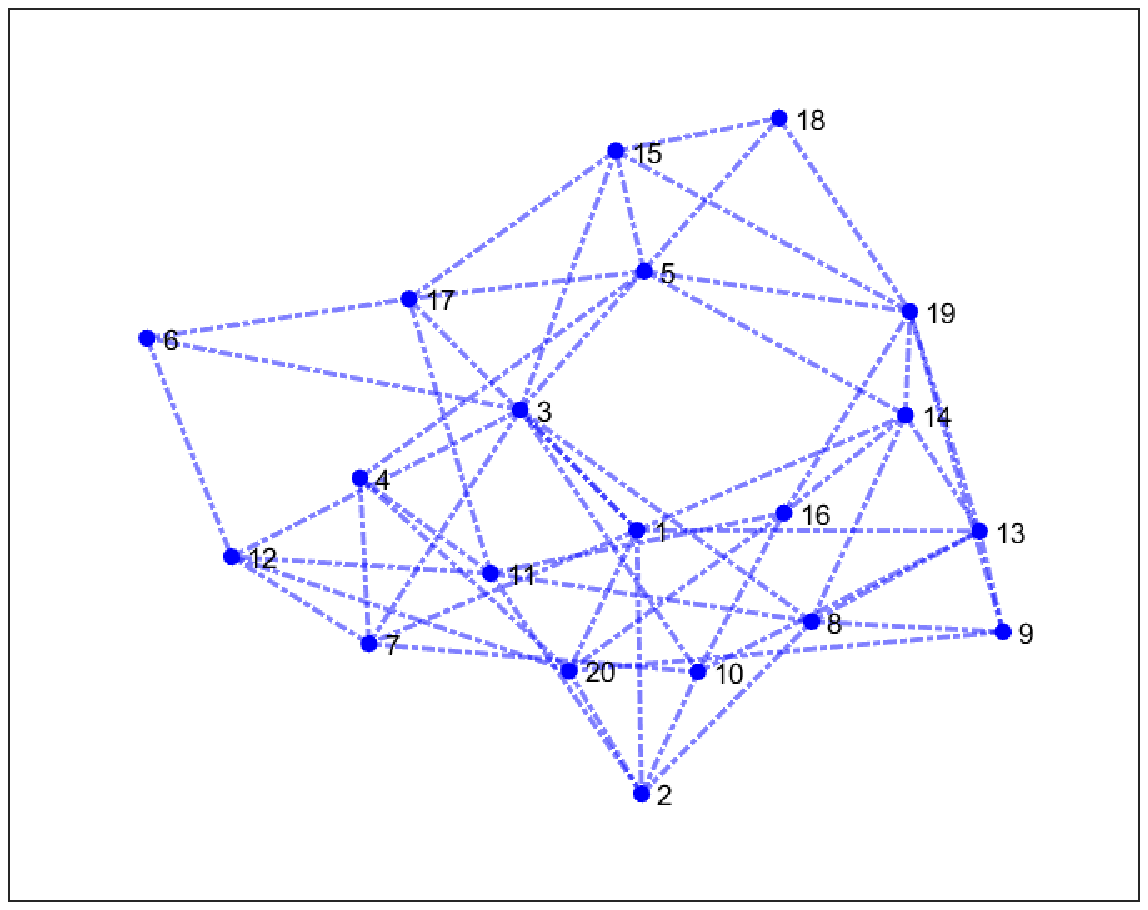}}
  \subfigure[$p_l=0.64$]
 { 		\includegraphics[width=2in]{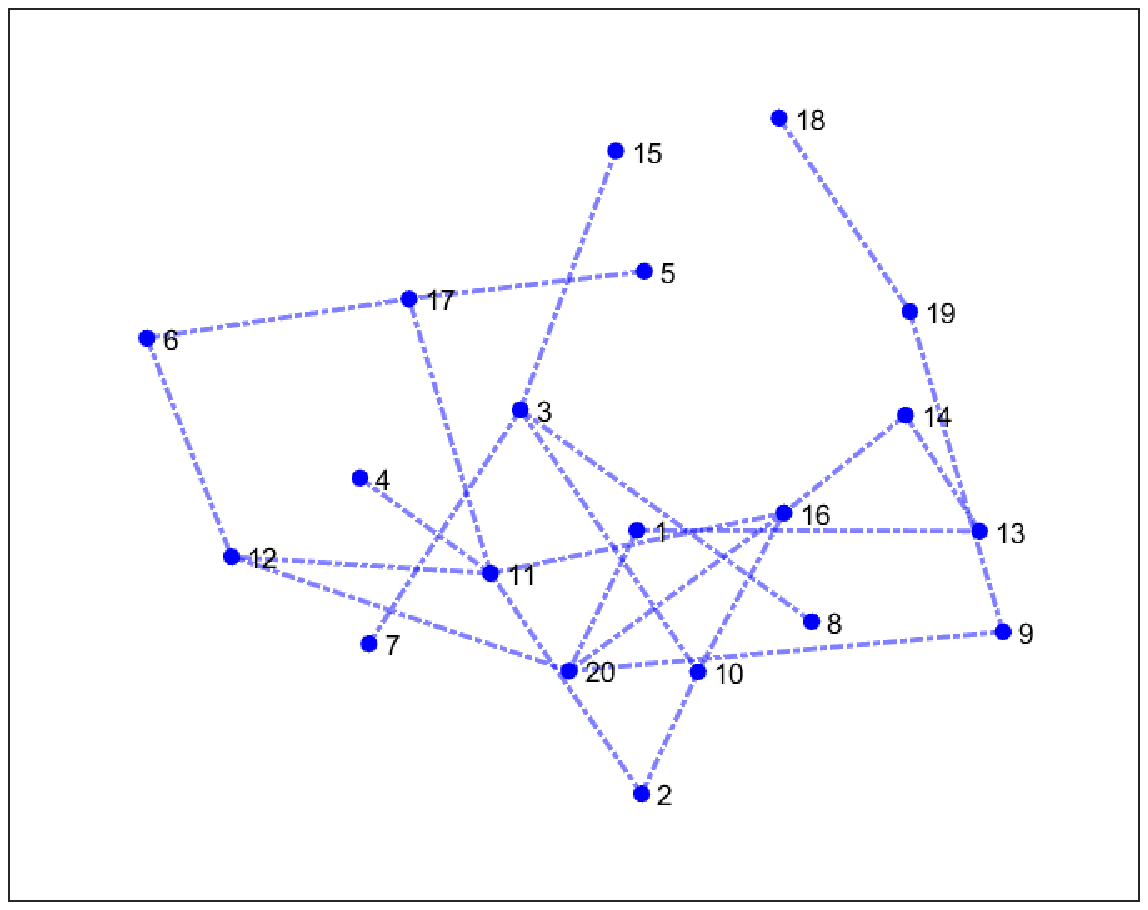}}
  \subfigure[$p_l=0.71$]
 {  		\includegraphics[width=2in]{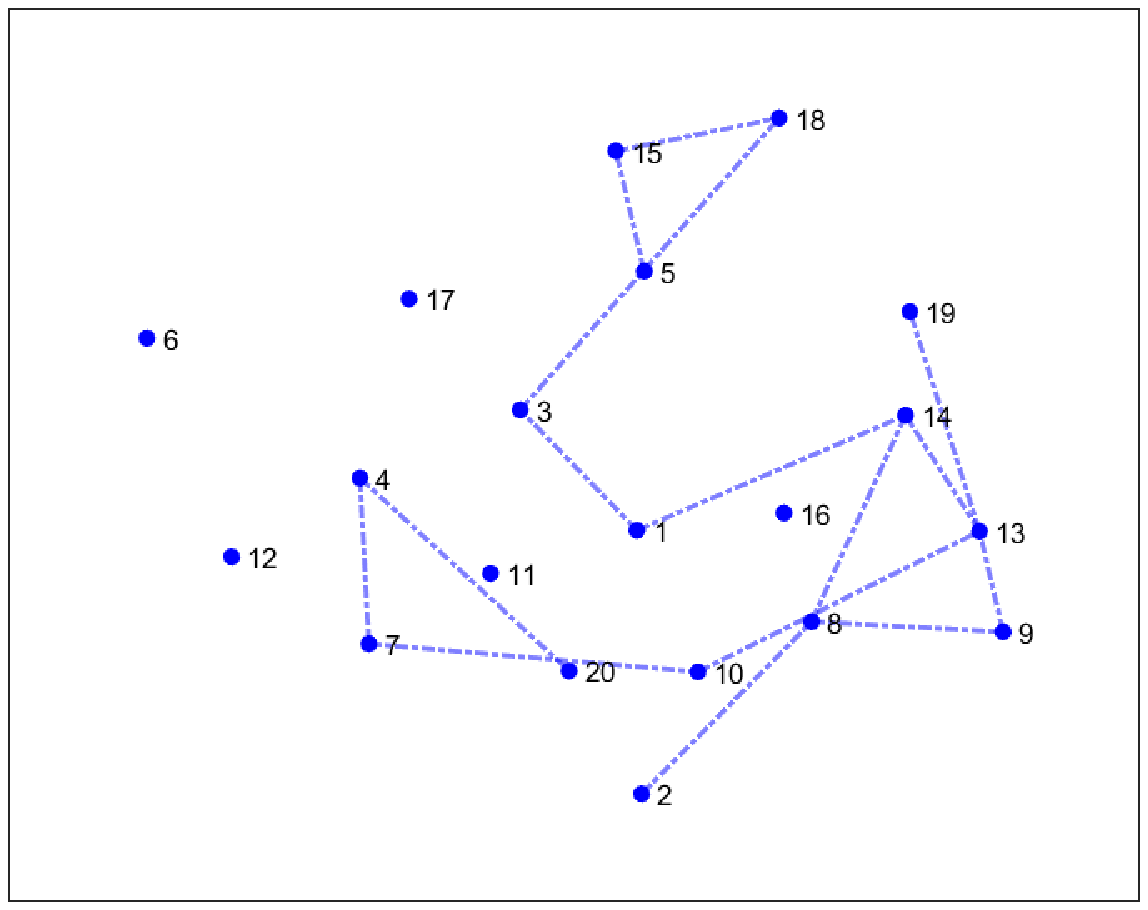}}
   \subfigure[$p_l=0.79$]
 {  		\includegraphics[width=2in]{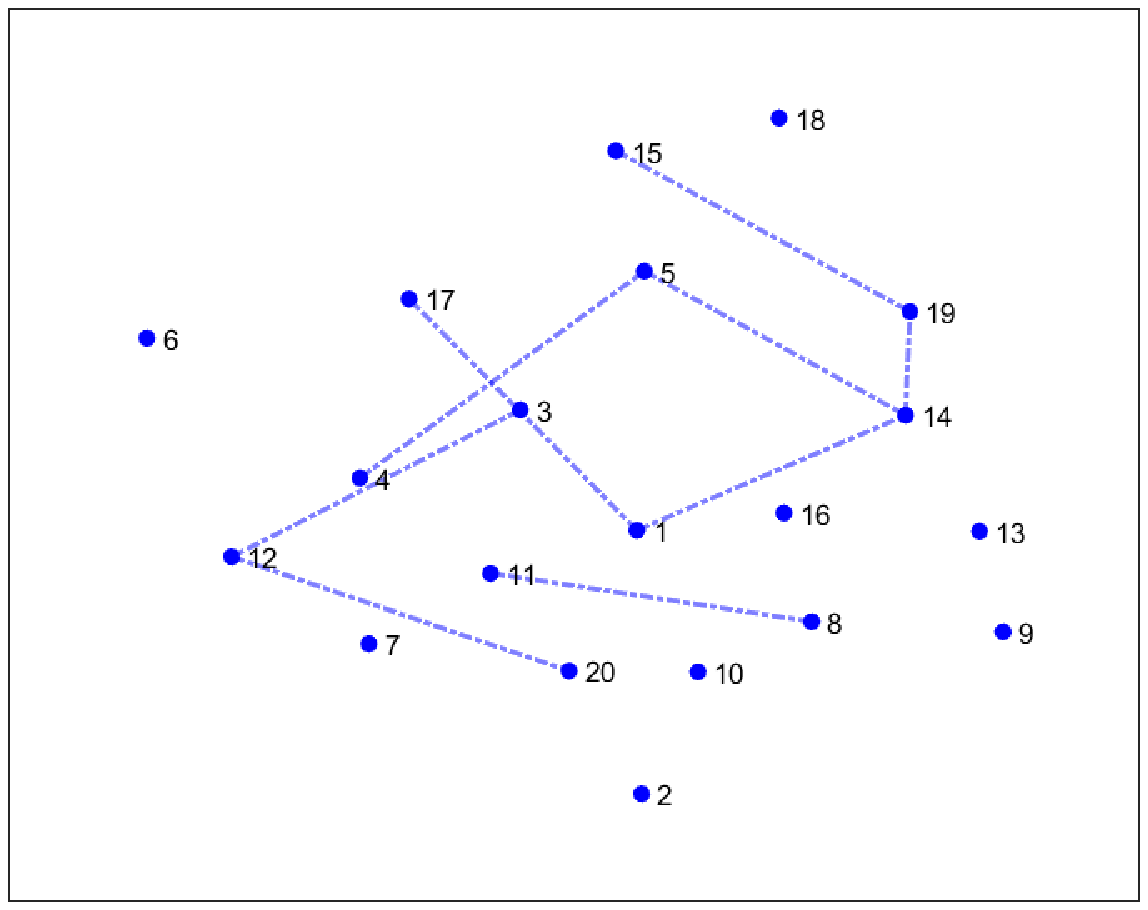}}
   \subfigure[$p_l=0.86$]
 { 		\includegraphics[width=2in]{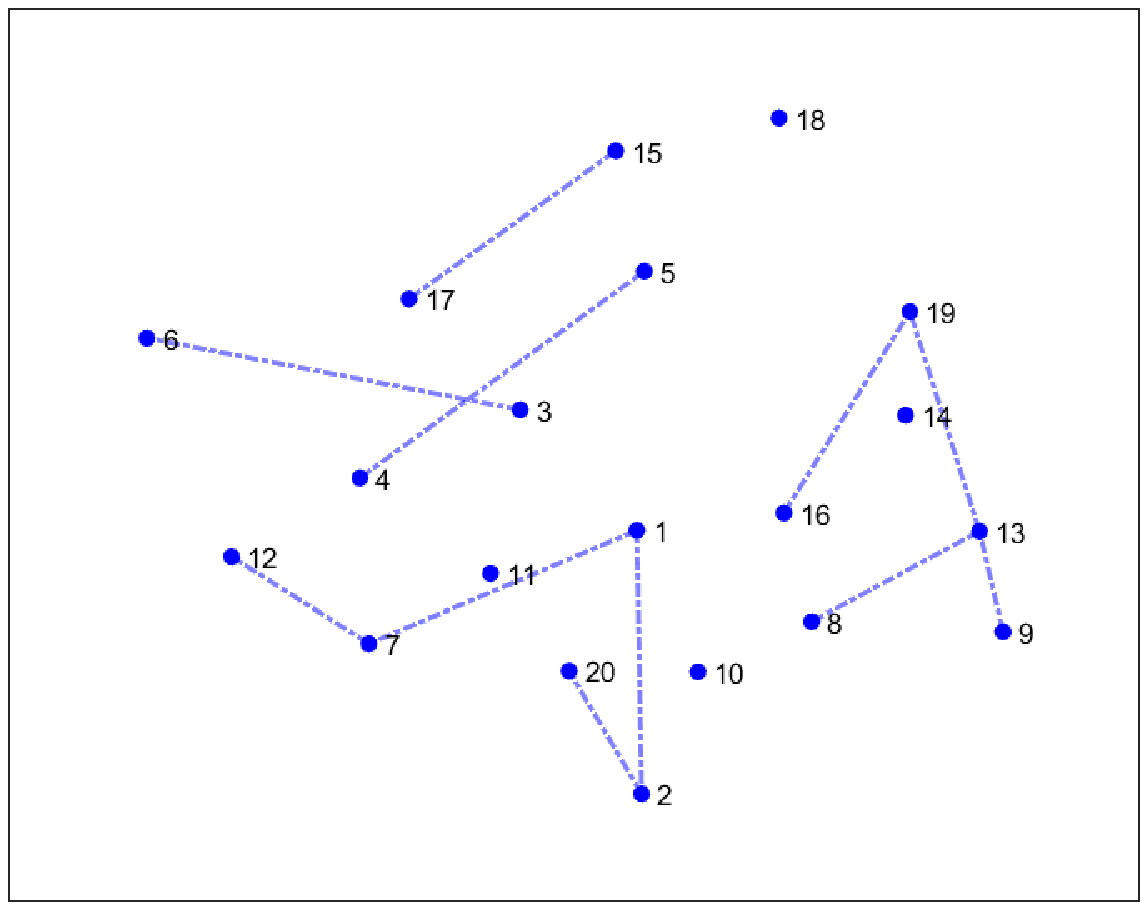}}
   \subfigure[$p_l=0.93$]
 { 		\includegraphics[width=2in]{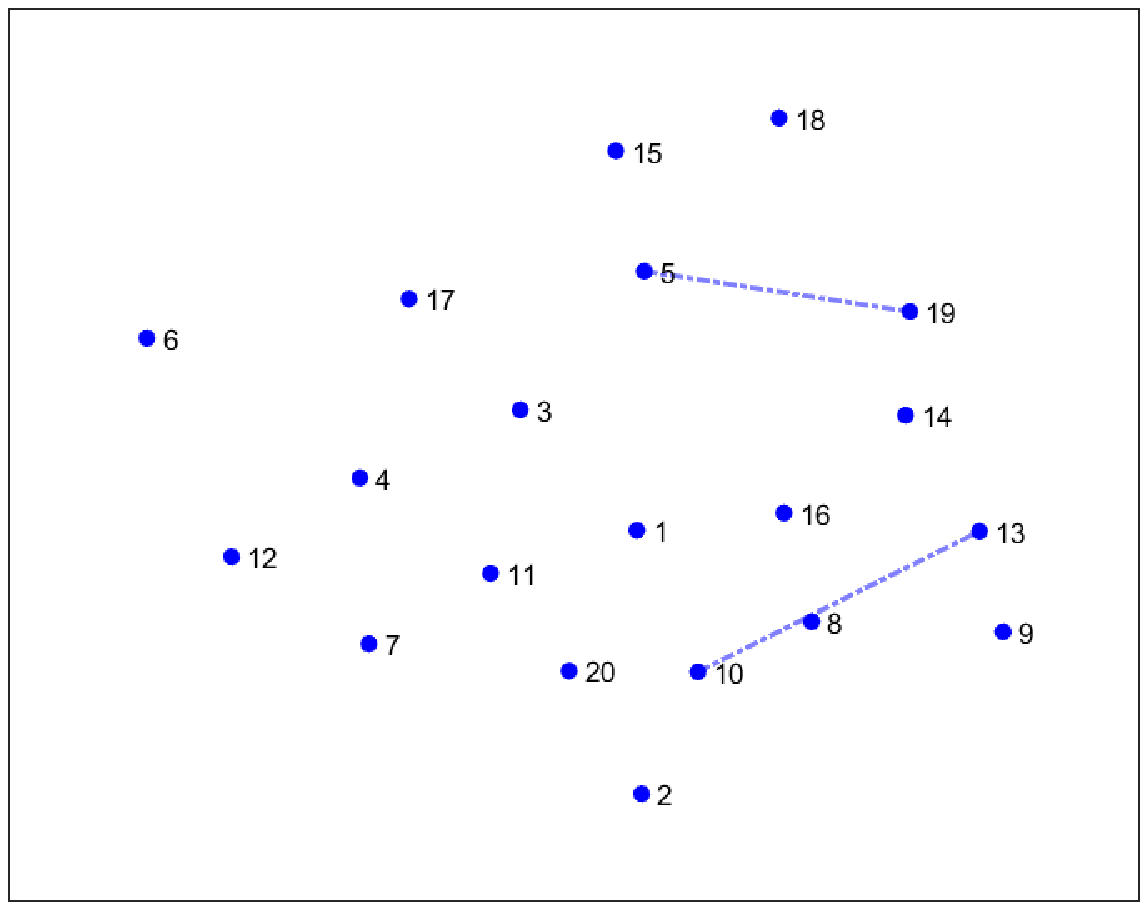}}
		\caption{This figure shows (a) the primary network and (b)-(f) the outcomes after different link removal rates $p_l$. %(associated with packet drop rates $p_d =  1-\sqrt{1-p_l}$).
		}
		\label{fig_graphs}
		\vspace{-0.5cm}
\end{figure*}

To verify, we apply Algorithm~\ref{alg_ac} over this dynamic network to solve the following allocation problem:
\begin{align} \nonumber
\sum_{i=1}^n f_i(\mb{x}_i) &=  \sum_{i=1}^n \frac{a_i}{2}(x_i-c_i)^2 + \log \left(1+\exp{(l_i(x_i-d_i))}\right) \\ \label{eq_f_quad2} &~\mbox{s.t.}~\sum_{i=1}^n x_i = b = 100 
\end{align}
with random parameters $a_i,l_i,c_i,d_i$ and  adding penalty terms $ \max \{x_i - M_i,0\}^2 + \max \{m_i - x_i,0\}^2$ to address the box constraints $ m_i = 2$, $M_i =7$. 
In the proposed dynamics \eqref{eq_sol}, we consider two example strongly sign-preserving nonlinear functions $g_n(z) = z+z^3 $ and logarithmic quantizer $g_l(z)$ with $\rho = \frac{1}{256}$. We consider random symmetric link weights $W_{ij}$ in the range $(0,10]$ (non-stochastic). For the states bounded by the given box constraints, we have $\kappa_n = 1$ and $\mc{K}_n = 147$ and $u=0.05$. For the logarithmic quantization we have $\kappa_l = 1-\frac{\rho}{2} = 0.998$ and $\mc{K}_l = 1 + \frac{\rho}{2} = 1.002$. For the given network $\mc{G}_B$ we have $\frac{\lambda_2}{\lambda_n^2} = 0.019$. This gives $\overline{\eta} = 0.0025$ as the  bound on the step rate for convergence. This gives a sufficient bound to ensure convergence and, for faster decay rates, in this simulation we choose $\eta = 0.05$. Fig.~\ref{fig_x} shows the evolution of the residual cost $\overline{F}$ and states $x_i$ in this example. 

\begin{figure} [t]
		\centering
		\includegraphics[width=3.5in]{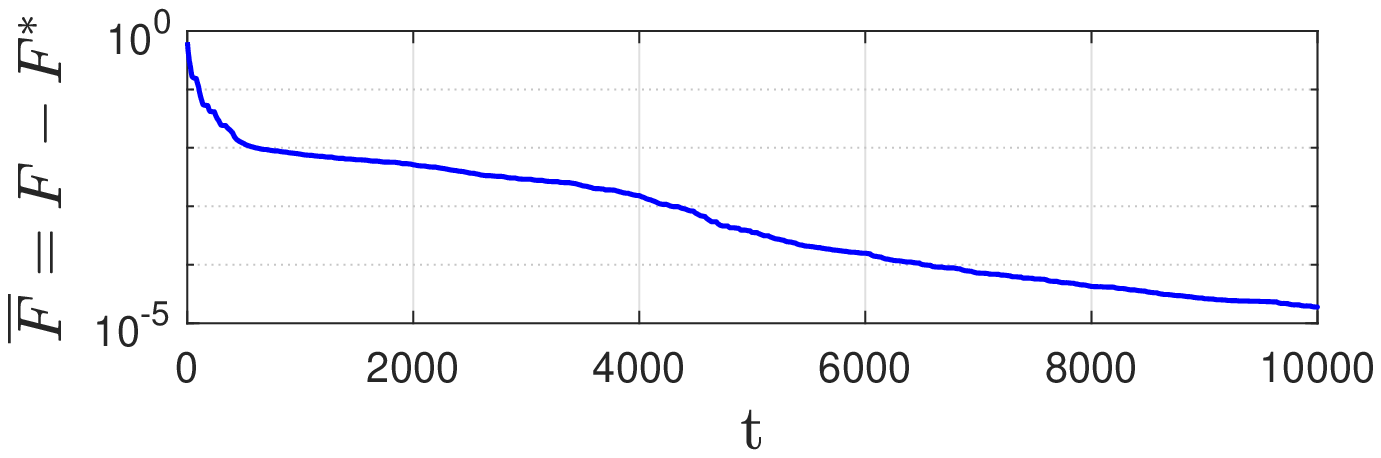}
 		\includegraphics[width=3.5in]{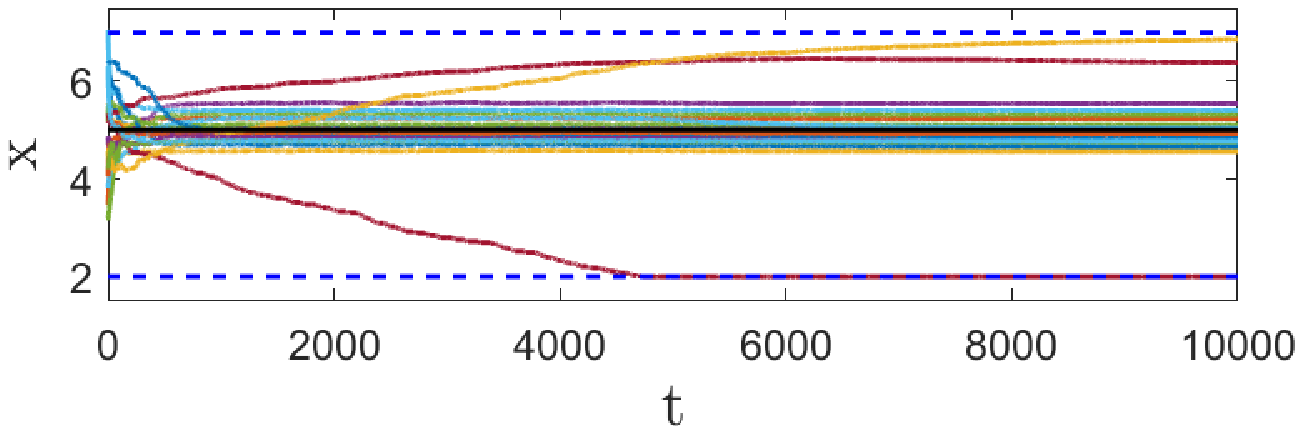}
		\caption{This figure shows (Top) the decrease in the residual cost $\overline{F}$  and (Bottom) the evolution of the assigned states over time with the dashed blue lines representing the box constraints and the solid black line as the average of states (to verify the all-time feasibility condition). }
		\label{fig_x}
		\vspace{-0.75cm}
\end{figure}

\section{Discussions and Concluding Remarks} \label{sec_con}
This paper provides a robust approach for distributed resource allocation over networks with unreliable links that resemble communication channels subject to packet drops. By relaxing (i) the all-time network connectivity requirement to uniform connectivity and (ii) the dynamic weight-stochastic condition to weight-balanced links, the proposed allocation algorithm is proved to converge under different packet drop rates, but over a longer time horizon. This approach finds application in mobile multi-agent networks and wireless communication networks with the inevitable high rate of packet losses. The proposed algorithm has many other advantages over the existing allocation solutions. For example, it can address nonlinearities on the agents' dynamics. This nonlinear model may resemble (i) physics of the system and inherent constraints in application, e.g., quantized information exchange \cite{mrd_2020fast,rikos2021optimal}, clipping, or actuator saturation, and (ii) purposely designed dynamics to  suppress impulsive noise or improve the convergence time \cite{mrd20211st}. The solution is not limited to the quadratic cost models, e.g., in CPU scheduling and economic dispatch, but it can handle general non-quadratic cost models, e.g., due to additive barrier functions \cite{wu2021new} and penalty terms \cite{nesterov1998introductory,bertsekas1975necessary}. As a future research direction, we are extending these results to address possible time-delays in data exchange over the links.

Another interesting preventive approach is \textit{survivable network design} via  \textit{$Q$-edge-connected graphs} which is  robust to link removal. This problem aims to design the network such that it remains connected after removing any subset of size (up to) $Q$ links, or it preserves a prescribed routing criterion (up to) a certain disruption cost \cite{6004644,panigrahi2011survivable}. By such a design, dropping up to $Q$ packets over the network, a route/path of package delivery between every two nodes is guaranteed over time, ensuring convergence of our allocation algorithm. Recall that existing survivable design algorithms ensure connectivity at every time-instant and extension to more relaxed uniform connectivity over time (as described in Assumption~\ref{ass_con}) is another promising future research direction. 

\bibliographystyle{IEEEbib}
\bibliography{bibliography}
\end{document}